\newtheorem{theorem}{Theorem}
\newtheorem{example}[theorem]{Example}
\newtheorem{lemma}[theorem]{Lemma}
\newtheorem{remark}[theorem]{Remark}
\def\Cbb{\mathbb{C}}
\def\Hcal{\mathcal{H}}
\def\diag{{\rm diag}}
\def\Acal{\mathcal{A}}
\def\Vcal{\mathcal{V}}
\def\Rbb{\mathbb{R}}
\def\Cbb{\mathbb{C}}
\def\diag{{\rm diag}}
\def\tr{{\rm tr}}
\def\isym{\mathfrak{i}}
\title{On the infeasibility of entanglement generation in Gaussian quantum systems via classical control\footnote{ Supported by the
Australian Research Council and Air Force Office of Scientific Research (AFOSR). This material is based on research sponsored by the Air Force Research Laboratory, under agreement number FA2386-09-1-4089.  The U.S. Government is authorized to reproduce and distribute reprints for Governmental purposes notwithstanding any copyright notation thereon.
The views and conclusions contained herein are those of the authors and should not be interpreted as necessarily representing the official policies or endorsements, either expressed or implied, of the Air Force Research Laboratory or the U.S. Government.
}}
\author{Hendra~I.~Nurdin, Ian~R.~Petersen, and Matthew~R.~James
\thanks{H. I. Nurdin is with the Research School of Engineering, The Australian National University, Canberra, ACT 0200,
Australia. Email: Hendra.Nurdin@anu.edu.au. Research supported by the Australian Research Council.}
\thanks{ I. R. Petersen is with the School of Information Technology and Electrical Engineering, University of New South Wales at the Australian Defence Force Academy, Canberra, ACT 2600, Australia. Email: i.r.petersen@gmail.com.}
\thanks{M.~R.~James is with the ARC Centre for Quantum Computation and Communication Technology, Research  School of Engineering, The Australian
    National University, Canberra, ACT 0200, Australia. Email: Matthew. James@anu.edu.au. Research supported by the
    Australian Research Council and AFOSR Grant FA2386-09-1-4089 AOARD 094089.}
}    
\date{}
\begin{document}
\maketitle

\begin{abstract}
 This paper uses a system theoretic approach to show that classical linear time invariant controllers cannot generate steady state entanglement in a bipartite Gaussian quantum system which is initialized in a Gaussian state.  The paper also shows that the use of classical linear controllers  cannot generate entanglement in a finite time from a bipartite system initialized in a separable Gaussian state. The approach reveals  connections between system theoretic concepts and the well known physical principle that local operations and classical communications  cannot generate entangled states starting from separable states.
\end{abstract}

\IEEEpeerreviewmaketitle

\section{Introduction}
\label{sec:intro}
 Entanglement is a unique feature of quantum mechanical systems not found in classical systems and is responsible for some of their predicted counterintuitive behavior, as exemplified by the famous Einstein-Podolsky-Rosen paradox \cite{NC00}. Entanglement gives rise to experimentally verifiable non-classical correlations among measurement statistics \cite{AGR82} that cannot be explained by the usual classical probability models. One well known application of entanglement
is {\em quantum teleportation}, the process of transferring the {\em unknown} state of one quantum system to another whilst destroying the state of the former, without the two quantum systems ever interacting directly with one another \cite{BBCJPW93}. This process is at the heart of quantum communication schemes. 

A bipartite quantum system  is the composite of two quantum systems. The state of such a system will be referred to as a bipartite state and is  represented by a density operator $\rho$\footnote{$\rho$ is a self-adjoint, positive semidefinite operator on an underlying Hilbert space with $\tr(\rho)=1$; e.g., see \cite{NC00}. }. Suppose that the system is composed of a quantum system A with underlying Hilbert space $\Hcal_A$ and a quantum system B with underlying Hilbert space $\Hcal_B$. A state $\rho$ is said to be {\em separable} if it  can be decomposed as $\rho =\sum_{k} p_k \rho^{A}_k \otimes \rho^{B}_k$, with $\rho_k^A$ and $\rho_k^B$ being density operators on $\Hcal_A$ and $\Hcal_B$, respectively, for $k=1,2,\ldots$. Here $\otimes$ denotes the tensor product of operators. If a bipartite state is not separable, then it is said to be {\em entangled}. 
For a pure state density operator $\rho$ (i.e., $\tr(\rho^2)=1$),  it can  be easily 
determined  if it is separable; e.g., see \cite{PV07}. However, determining the separability of a mixed state bipartite density operator
$\rho$ (i.e., $\tr(\rho^2)<1$),  is far from straightforward and a complete characterization is only known for certain types of bipartite systems, such as for bipartite systems on
the finite dimensional Hilbert space $\Cbb^2 \otimes \Cbb^2$. In fact, the general problem of determining the separability of a given mixed quantum state is known to be
NP-hard \cite{Gurv04}. Another class of bipartite systems
for which a complete characterization of separability is known is the
class of bipartite Gaussian systems \cite{Sim00,Hole75,Ades07,PSL09}. These systems are commonly
encountered in the field of quantum optics. For such systems, the underlying Hilbert space is the tensor product of two quantum harmonic oscillator Hilbert spaces; e.g., see \cite[Chapter III]{Mey95}. Also, the separability of a state can be completely determined from  the (symmetrized) covariance matrix
of the canonical position and momentum operators of the system
\cite{Sim00}.  The class of systems considered in this paper
is the class of bipartite Gaussian systems.  In particular, we  analyze {\em dynamical} bipartite   Gaussian
quantum systems whose covariance matrices evolve in time.  Hence, the
separability or entanglement of these systems also evolves in time. In
quantum optics, these   dynamical bipartite Gaussian systems correspond
to a class of linear quantum stochastic systems
\cite{JNP06,NJP07b,NJD08} that  are driven by Gaussian bosonic fields
and with a density operator initially in a Gaussian
state. 
The dynamics of such systems can be  represented  by linear quantum
stochastic differential equations  (QSDEs) in the canonical position and momentum operators
and this makes them suitable for  a system-theoretic
analysis. 

We study the problem of entanglement generation using classical finite dimensional  (linear time-invariant (LTI) and time varying) 
controllers from a system-theoretic point of view. The main contribution of the paper is 
the use of system theoretic arguments and methods to show that the
application of a classical linear dynamic controller cannot induce
entanglement in a  dynamical bipartite Gaussian  system
which is initially in a separable state. 
Our result is in agreement with the fundamental
physical principle that  Local Operations and Classical Communication
(LOCC) cannot generate entanglement between initially separable
states; e.g., see \cite{PV07} for a proof of this  result. 
One motivation for the results of this paper is that they provide a starting point for
investigating  connections between systems theory and quantum
physical principles. The no-go results for Gaussian quantum systems
considered here are in a similar spirit to other no go results that have
previously been obtained in \cite{SSHKST87}, showing that linear modulation of a
beam cannot create out-of-loop squeezing, and \cite{WM94}, showing that neither in-cavity squeezing nor output squeezing  can be created
using linear modulation of the cavity field. 

\section{Preliminaries}
\label{sec:prelim} 

\subsection{Notation}
We will use the following notation: $\isym=\sqrt{-1}$, $^*$ denotes the adjoint of a linear operator
as well as the conjugate of a complex number. If $A=[a_{jk}]$ then $A^{\#}=[a_{jk}^*]$, and $A^{\dag}=(A^{\#})^T$, where $^T$ denotes matrix transposition.  $\Re\{A\}=(A+A^{\#})/2$ and $\Im\{A\}=\frac{1}{2\isym}(A-A^{\#})$.
We denote the identity matrix by $I$ whenever its size can be
inferred from context and use $I_{n}$ to denote an $n \times n$
identity matrix. Similarly, $0$ denotes  a matrix with zero
entries whose dimensions can be determined from context. 
$\diag(M_1,M_2,\ldots,M_n)$  denotes a block diagonal matrix with
square matrices $M_1,M_2,\ldots,M_n$ on its diagonal, and $\diag_{n}(M)$
denotes a block diagonal matrix with the
square matrix $M$ appearing on its diagonal blocks $n$ times. Also, we
will let $J=\left[\begin{array}{rr}0 & 1\\-1&0\end{array}\right]$.

\subsection{The class of linear quantum stochastic systems}
\label{sec:linear-summary} 
In this paper, we are concerned with a class of quantum stochastic
models of open (i.e., quantum systems  that can interact with an
environment) {\em Markov} quantum systems  that are widely used  in
the area of  quantum optics. Such models have been used in the physics
and mathematical physics literature since at least the 1980's; e.g., see \cite{HP84,GC85,KRP92,GZ00,WM10}. We focus on the special sub-class of {\em linear} quantum stochastic models (e.g.,  see \cite[Section 7.2]{WaM94}, \cite[Section 6.6]{WM10}, \cite[Sections 3, 3.4.3, 5.3, Chapters 7 and 10]{GZ00}, \cite{EB05},  \cite[Section 5]{BE08}, \cite{JNP06,NJP07b,NJD08,Mab08}) that describe the Heisenberg evolution of the (canonical) position and momentum operators of several independent open quantum harmonic oscillators that are coupled to external coherent bosonic fields, such as coherent laser beams. 
These linear stochastic models describe quantum optical devices such as optical cavities \cite[Section 5.3.6]{BR04}\cite[Chapter 7]{WaM94}, linear quantum amplifiers \cite[Chapter 7]{GZ00}, and finite bandwidth squeezers \cite[Chapter 10]{GZ00}. Following  \cite{JNP06,NJP07b,NJD08}, we will refer to this class of models as {\em linear quantum stochastic systems}.

Suppose we have $n$ independent quantum harmonic oscillators. The $j$th quantum harmonic oscillator has position and momentum operators
$q_j$ and $p_j$ with underlying Hilbert space $L^2(\mathbb{R})$; see, e.g., \cite[Chapter III]{Mey95}. The position and momentum operators satisfy the canonical commutation relations $[q_j,p_k]=2\isym \delta_{jk}$, $[q_j,q_k]=0$, and $[p_j,p_k]=0$, where $\delta_{jk}$ denotes the Kronecker delta and $[\cdot,\cdot]$ denotes the commutation operator. 
The  quantum harmonic oscillators are assumed to be
coupled to $m$ external independent quantum fields modelled by bosonic annihilation field operators $\mathcal{A}_1(t), \mathcal{A}_2(t),\ldots,\mathcal{A}_m(t)$ which are defined on  separate Fock spaces (over $L^2(\Rbb)$) for each field operator  \cite{HP84,KRP92}. For each annihilation field operator $\mathcal{A}_j(t)$, there is a corresponding creation field operator $\mathcal{A}_j^*(t)$, which is defined on the same Fock space and is the operator adjoint of $\mathcal{A}_j(t)$.
The field operators are adapted quantum stochastic processes with forward differentials $d\mathcal{A}_j(t)=\mathcal{A}_j(t+dt)-\mathcal{A}_j(t)$ and $d\mathcal{A}_j^*(t)=\mathcal{A}_{j}^*(t+dt)-\mathcal{A}_j^*(t)$ that have the quantum It\^{o} products \cite{HP84,KRP92}:
$d\mathcal{A}_{j}(t)d\mathcal{A}_{k}(t)^*=\delta_{jk}dt;\, 
d\mathcal{A}_{j}^*(t)d\mathcal{A}_{k}(t) = d\mathcal{A}_{j}(t)d\mathcal{A}_{k}(t)=d\mathcal{A}_{j}^*(t) d\mathcal{A}_{k}^*(t)=0. $
 
We collect the position and momentum operators in the column vector
$x$ defined by $x=(q_1,p_1,q_2,p_2,\ldots,q_n,p_n)^T$. Note that 
we may write the canonical commutation relations as
$xx^T-(xx^T)^T=2\isym\Theta$ with $\Theta=\diag_{n}(J)$. We take the
composite system of $n$ quantum harmonic oscillators to have a {\em
  quadratic Hamiltonian} $H$ given by $H=\frac{1}{2} x^T R x$, where
$R$ is a real $2n \times 2n$ symmetric matrix. The quantum harmonic oscillators are
coupled to the $k$-th quantum field  via the  singular interaction
Hamiltonian $H_k=\isym (L_k \eta_k^*(t)-L_k^*\eta_k(t))$
\cite{GC85,GZ00}, where $L_k=K_k x$ (with $K_k \in \Cbb^{1 \times 2n}$)
is a linear coupling operator describing the linear coupling of the quantum harmonic oscillator position and momentum
operators to $\eta_k(t)$. Here $\eta_k(t)$ is a {\em quantum white noise process} \cite{GC85,GZ00} satisfying the relation $\mathcal{A}_k(t)=\int_{0}^{t} \eta_k(s)ds$. We now collect the coupling operators $L_1,L_2,\ldots,L_m$ together in one {\em linear coupling vector} $L= (L_1,L_2,\ldots,L_m)^T=K x$, with $K=[\begin{array}{cccc} K_1^T & K_2^T & \ldots & K_m^T\end{array}]^T$, and collect the field operators together as $\mathcal{A}(t)=(\mathcal{A}_1(t),\mathcal{A}_2(t),\ldots,\mathcal{A}_m(t))^T$. Then 
the {\em joint} evolution of the oscillators and the quantum fields is given by a unitary adapted process $U(t)$ satisfying the Hudson-Parthasarathy QSDE \cite{HP84,KRP92}:
\begin{eqnarray*}
dU(t) &=& \biggl({\rm tr}\bigl((S-I)^T d\Lambda(t)\bigr) +  d\mathcal{A}(t)^{\dag} L - L^{\dag}Sd\mathcal{A}(t)\\
&&\quad -(\isym H + \frac{1}{2}L^{\dag}L )dt \biggr)U(t),
\end{eqnarray*}
where $S \in \Cbb^{m \times m}$ is a complex unitary matrix (i.e., $S^{\dag}S=SS^{\dag}=I$) called the {\em scattering matrix}, and $\Lambda(t)=[\Lambda_{jk}(t)]_{j,k=1,\ldots,m}$, with $\Lambda_{kj}(t)=\Lambda_{jk}(t)^*$. The processes $\Lambda_{jk}(t)$ for $j,k=1,\ldots,m$ are adapted quantum stochastic processes referred to as {\em gauge processes}, and the forward differentials $d\Lambda_{jk}(t)=\Lambda_{jk}(t+dt)-\Lambda_{jk}(t) $ $j,k=1,\ldots,m$ have the quantum It\^{o} products:
\begin{eqnarray*}
d\Lambda_{jk}(t)
d\Lambda_{j'k'}(t)\hspace*{-1pt}&=&\hspace*{-1pt}\delta_{kj'}d\Lambda_{jk'}(t), \ 
d\mathcal{A}_j(t)
d\Lambda_{kl}(t)\hspace*{-1pt}=\hspace*{-1pt}\delta_{jk}d\mathcal{A}_l(t),\\
d\Lambda_{jk} d\mathcal{A}_l(t)
^*\hspace*{-1pt}&=&\hspace*{-1pt}\delta_{kl}d\mathcal{A}_j^*(t),
\end{eqnarray*}
with all other remaining cross products between $\mathcal{A}_j(t),\mathcal{A}^*_k(t),\Lambda_{j'k'}(t)$ being $0$.

For any adapted processes $X(t)$ and $Y(t)$ satisfying a quantum It\^{o}
stochastic differential equation, we have the {\em quantum It\^{o} rule}
$d(X(t)Y(t))=X(t)dY(t)+(dX(t))Y(t) + dX(t) dY(t)$; e.g., see
\cite{HP84,KRP92}.  Using the quantum It\^{o} rule and the quantum It\^{o} products
given above, as well as exploiting the canonical commutation relations
between the operators in $x$, the {\em Heisenberg evolution}
$x(t)=U(t)^* x U(t)$ of the canonical operators in the vector $x$ 
can be obtained. Then $x(t)$  satisfies the QSDE (see \cite{EB05,BE08,JNP06,NJD08}):
\begin{align}
dx(t)&= d(U(t)^* x U(t)), \notag \\
&=  A_ox(t)dt+ B_o\left[\small \begin{array}{c} d\mathcal{A}(t)
\\ d\mathcal{A}(t)^{\#} \end{array} \normalsize \right];  
x(0)=x, \notag\\
dY(t)&= d(U(t)^* \mathcal{A}(t)U(t)), \notag \\
&=  C_o x(t)dt+  D_o d\mathcal{A}(t), \label{eq:qsde-out}
\end{align}
with $A_o=2\Theta(R+\Im\{K^{\dag}K\})$, $B_o=2\isym \Theta [\begin{array}{cc}
-K^{\dag}S & K^TS^{\#}\end{array}]$,
$C_o=K$, and $D_o=S$. Here,  $Y(t)=(Y_1(t),\ldots,Y_m(t))^T=U(t)^* \mathcal{A}(t) U(t)$ is a vector of
{\em output fields} that results from the interaction of the quantum
harmonic oscillators and the incoming quantum fields
$\mathcal{A}(t)$. Note that the dynamics of $x(t)$ is linear, and
$Y(t)$ depends linearly on $x(t)$. We refer to $n$ as the number of {\em
  degrees of freedom} of the linear quantum stochastic system.

In this paper it will be  convenient to write the dynamics in quadrature form as in  \cite{JNP06}:
\begin{align}
dx(t)&=Ax(t)dt+Bdw(t);\, x(0)=x. \nonumber\\
dy(t)&= C x(t)dt+ D dw(t), \label{eq:qsde-out-quad}
\end{align}
with
\begin{eqnarray*}
w(t)&=&2(\Re\{\Acal_1(t)\},\Im\{\Acal_1(t)\},\ldots,\Re\{\Acal_m(t)\},\Im\{\Acal_m(t)\})^T; \\
y(t)&=&2(\Re\{Y_1(t)\},\Im\{Y_1(t)\},\ldots,\Re\{Y_m(t)\},\Im\{Y_m(t)\})^T.
\end{eqnarray*}
Here, the real matrices $A,B,C,D$ are in a one to one correspondence
with the matrices 
$A_o,B_o,C_o,D_o$. Also, the quantity $w(t)$
satisfies the It\^{o} relationship $dw(t)dw(t)^T = F_wdt$ where $F_w \geq
0$; see \cite{JNP06}. Furthermore, we define the matrix $S_w =
\frac{1}{2}\left(F_w + F_w^\#\right)$ and the differential commutation
matrix  $T_w = \frac{1}{2}\left(F_w -F_w^\#\right)$. For the boson fields that we consider here, $T_w$ is necessarily of the form $T_w=\isym \diag_{m}(J)$. 
The symmetric matrix $S_w$ is then such that $F_w \geq 0$ and for the Gaussian boson fields that are of interest here this matrix reflects the statistics
of the field. For instance, $S_w=I$ corresponds to a vacuum Gaussian boson field.

Fig.~\ref{fig:two-cavities} shows an example of a two degree of freedom linear quantum stochastic system connected to a classical controller.   The linear quantum stochastic system consists of two independent  optical cavities \cite{BR04,WaM94,NJD08} denoted by $G_1$ and $G_2$. The two optical cavities are connected to the classical controller via a homodyne detector (HD)  which measures one of the quadratures of the output field $Y_{1}(t)$ from $G_1$, and an electro-optic modulator (MOD) which modulates the quantum field $\mathcal{A}_{3}(t)$ with the controller output signal $u(t)$ and then sends the resulting field $\mathcal{A}_{2}(t)$  to $G_2$. 
\begin{figure*}[htbp]
\centering
\includegraphics[scale=0.30]{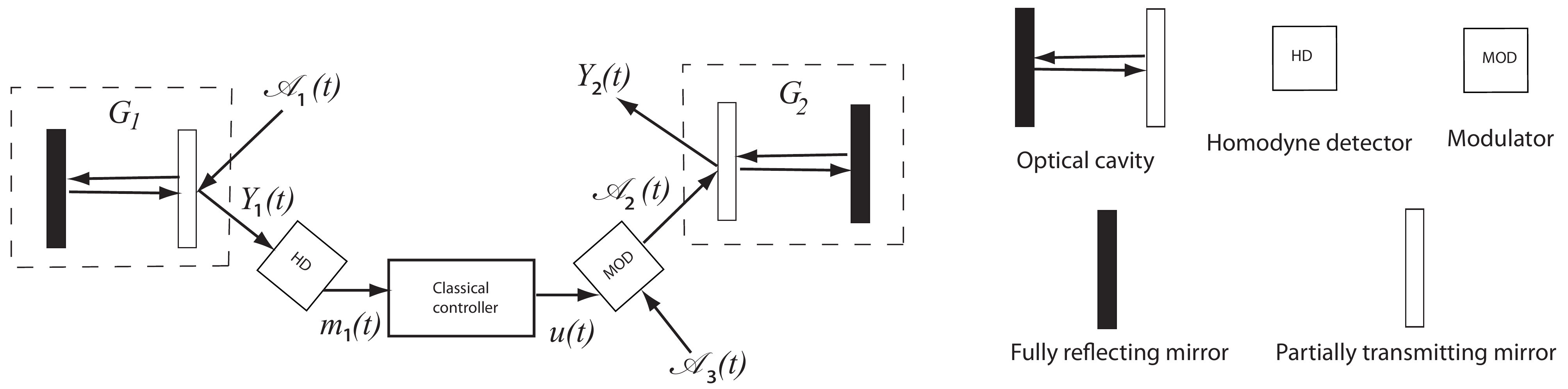}
\caption{The interconnection of two optical cavities $G_1$ and $G_2$ via  a classical
  controller. } \label{fig:two-cavities}
\end{figure*}

\begin{remark}
\label{rm:two-df-terms}  For the remainder of this paper, we will consider  the case where $n=2$, corresponding to two degree of freedom linear quantum stochastic 
systems, with the quantum harmonic oscillators being initialized in a Gaussian state\footnote{A state with density operator $\rho$, is said to be {\em Gaussian} if $\tr(\rho
e^{\isym \lambda^Tx})=e^{\isym \lambda^T m-\frac{1}{2}\lambda^T G
\lambda}$ for all $\lambda \in \Rbb^{2n}$, where $m \in \Rbb^{2n}$ and
$G$ is a real symmetric matrix satisfying $G+\isym \Theta \geq 0$ with $\Theta$ as
defined previously; see, e.g., \cite{Lin98,GZ00,KRP92}.}. 
Also, for such a  linear quantum stochastic system $G$,  we define a  {\em (linear) dynamical bipartite  Gaussian quantum system (corresponding to  $G$)} as the  open quantum system obtained from $G$ by tracing out (averaging) the bosonic fields.
 \end{remark}

\section{Simple system-theoretic proof of the Heisenberg Uncertainty Principle}
\label{sec:proof-HUP}
For a  system of the form
(\ref{eq:qsde-out-quad}), the corresponding symmetric covariance matrix 
defined by 
\[
P(t)=\frac{1}{2}\tr(\rho(0)(x(t)x(t)^T+(x(t)x(t)^T)^T))
\]
varies with time.
Here, $ \rho(0)$ is the initial density operator of the
overall composite closed system.
In this section, we will assume that the matrix $A$ in (\ref{eq:qsde-out-quad}) is Hurwitz.  Then the  steady-state symmetrized covariance matrix $P=\mathop{\lim}_{t \rightarrow \infty} P(t)$ satisfies the {\em real} Lyapunov
equation (see, e.g., \cite[p. 327]{WM10}, \cite[Section 4]{NJP07b}): 
\begin{equation}
A P + P A^T + BS_wB^T =0. \label{eq:cov-ARE}
\end{equation}
On the other hand, since the commutation relations are preserved, we also
have that \cite{JNP06}:
\begin{equation}
A \Theta + \Theta A^T -\isym B T_w B^T =0. \label{eq:comm-cond}
\end{equation}

Defining the complex Hermitian matrix  $\tilde P=P+\isym \Theta$, we see from combining  (\ref{eq:cov-ARE}) and (\ref{eq:comm-cond}) that $\tilde P$ satisfies the {\em complex} Lyapunov equation:
$A\tilde P + \tilde P A^T + BF_wB^T=0$,  
where $F_w=S_w +T_w$. Since $F_w \geq 0$  and $A$ is Hurwitz, it
follows that $\tilde P \geq 0$; e.g., see \cite{Bern05}. Equivalently, in terms of $P$ and $\Theta$ we have that:
 $P+\isym \Theta \geq 0$.  
This matrix inequality is a version of the  Heisenberg
 Uncertainty Principle that must be satisfied by all Gaussian quantum
 systems; e.g., see \cite{Sim00,Hole75} for this alternate form of the Heisenberg Uncertainty Principle.  

\section{Classical LTI controllers cannot generate steady-state bipartite entanglement in linear Gaussian quantum systems}

\label{sec:main-results}
\subsection{Separability criterion for dynamical bipartite  Gaussian systems}
\label{sec:sep-criterion}

It has been shown in \cite{Sim00} that the separability 
of a bipartite Gaussian density operator $\rho$ can be completely determined from a
complex linear matrix inequality (LMI) involving the (symmetrized) covariance
matrix $P=\frac{1}{2}\tr(\rho (xx^T+(xx^T)^T))$; see also \cite{Ades07,PSL09}. 

\begin{lemma}[\cite{Sim00,Ades07,PSL09}]
\label{lem:sep-covar} A bipartite Gaussian density operator $\rho$ is
separable if and only if the corresponding covariance matrix $P$ satisfies the LMI 
$P+\isym \diag(J,-J) \geq 0$.
\end{lemma}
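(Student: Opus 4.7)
The plan is to prove both directions using the Peres--Horodecki partial transposition criterion, specialised to the Gaussian setting. Let $T_B$ denote partial transposition on the second subsystem. In the position representation, this operation fixes $q_2$ and flips the sign of $p_2$, while leaving $q_1,p_1$ untouched. Hence, at the level of symmetrised second moments, partial transposition acts on the covariance matrix by $P \mapsto \Lambda P \Lambda^T$, where $\Lambda = \diag(1,1,1,-1)$.

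For the forward implication, if $\rho$ is separable, then $\rho^{T_B} = \sum_k p_k\, \rho_k^A \otimes (\rho_k^B)^T$ is again a convex combination of products of density operators, and is therefore itself a valid density operator, whose covariance matrix is $\Lambda P \Lambda^T$. Applying the Heisenberg uncertainty inequality $Q + \isym \Theta \geq 0$ (derived in Section~\ref{sec:proof-HUP} with $\Theta = \diag(J,J)$) to the choice $Q = \Lambda P \Lambda^T$ yields $\Lambda P \Lambda^T + \isym \diag(J,J) \geq 0$. Since $\Lambda$ is orthogonal and self-inverse, conjugating both sides by $\Lambda$ and using the direct computation $\Lambda \diag(J,J) \Lambda = \diag(J,-J)$ produces $P + \isym \diag(J,-J) \geq 0$, as required.

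The substantive direction is the converse. Partial transposition is in general only a necessary condition for separability, and its sufficiency for two-mode Gaussian states is precisely Simon's theorem. My plan here would be to follow the standard-form reduction: exploit the fact that local symplectic transformations on each mode preserve both separability and the LMI (because $\diag(J,-J)$ is preserved up to local symplectic conjugation), and use them to bring $P$ into a canonical form parametrised by a small set of symplectic invariants. One then shows that whenever the LMI holds in this normal form, an explicit convex decomposition of the corresponding Gaussian $\rho$ into products of Gaussian coherent states can be written down.

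The construction of this explicit decomposition is the main obstacle, since it relies on a genuinely Gaussian, low-mode-number phenomenon rather than any generic operator-theoretic argument. Since the lemma as stated is attributed to \cite{Sim00,Ades07,PSL09}, I would simply invoke their constructions for the converse direction rather than reproduce the detailed case analysis on the invariants of the standard form.
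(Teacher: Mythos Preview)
The paper does not actually prove this lemma: it is stated as a known result and attributed to \cite{Sim00,Ades07,PSL09} with no accompanying argument. Your sketch is precisely the standard Peres--Horodecki/Simon argument from those references, so in that sense you are supplying what the paper deliberately omits rather than deviating from it.

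One small misattribution: you invoke Section~\ref{sec:proof-HUP} for the inequality $Q+\isym\Theta\geq 0$, but that section derives it only for the \emph{steady-state} covariance of a stable linear quantum stochastic system, via the complex Lyapunov equation. What you need here is the uncertainty inequality for the covariance of an \emph{arbitrary} Gaussian state, which the paper records in the footnote defining Gaussian states (and attributes to \cite{Hole75,Sim00}). The content is the same inequality, but the justification is different and you should cite the general Robertson--Schr\"odinger form rather than the dynamical derivation.

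Otherwise your forward direction is correct and complete, and your decision to defer the converse to Simon's explicit standard-form construction is exactly what the paper itself does by citation.
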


Note here that  without loss of generality, we
can assume that $x$ has zero mean because the mean of $x$
plays no role in determining the separability of the associated density
operator. 
Now, in the case of a dynamical bipartite  Gaussian quantum system corresponding to a linear quantum stochastic system, the covariance
matrix can vary with time and is given by 
\begin{eqnarray}
\label{cov_mat}
P(t)&=& 
\frac{1}{2}\tr(\rho_{\rm o}(t)
(xx^T+(xx^T)^T))\nonumber \\
&=&\frac{1}{2}\tr(\rho(t)(xx^T+(xx^T)^T))\nonumber \\
&=&\frac{1}{2}\tr(\rho(0)(x(t)x(t)^T+(x(t)x(t)^T)^T)),
\end{eqnarray}
where $ \rho(t)$ is the density operator  at time $t \geq 0$ of the
overall composite closed system, while $\rho_{\rm o}(t)$ is  the
reduced density operator of the two quantum harmonic oscillators at
time $t$ obtained 
by tracing out the bosonic fields; e.g., see
\cite{GZ00}.  Note that the
second equality in (\ref{cov_mat}) follows from the definition of the partial trace
(e.g., see \cite[p. 102]{KRP92}) 
since the elements of $x$  are operators on the bipartite quantum
harmonic oscillator Hilbert space. Also, the final equality in (\ref{cov_mat})
follows by  switching from the Schr\"{o}dinger picture (in which $\rho(t)$
evolves in time) to the Heisenberg picture (in which $x(t)$ evolves in time
and the overall density operator is fixed as $\rho(0)$). Thus, to check
whether the system is separable at any time $t \geq 0$, it is
equivalent to 
check if the LMI $P(t) + \isym \diag(J,-J) \geq 0$ is satisfied at
that time $t$. 

\subsection{Separability of dynamical bipartite Gaussian systems coupled via a classical LTI controller}
Let $G_1=(A_1,B_1,C_1,D_1)$ and $G_2=(A_2,B_2,C_2,D_2)$ define two linear quantum
stochastic systems of the form (\ref{eq:qsde-out-quad}). We form a linear quantum stochastic system $G=(A,B,C,D)$ of the form (\ref{eq:qsde-out-quad}) from $G_1$ and $G_2$, with
$A=\diag(A_1,A_2)$, $B=\diag(B_1,B_2)$, $C=\diag(C_1,C_2)$, and $D=\diag(D_1,D_2)$. We then obtain a  dynamical bipartite Gaussian quantum system corresponding to $G$ (see Remark \ref{rm:two-df-terms}). 
The quantum system $G$ is connected to a finite dimensional classical controller as shown in Fig. \ref{fig:lqt} to form a quantum feedback control system. In this quantum feedback control system, some of the output fields from $G_1$ and $G_2$ are measured and fed to the classical controller 
that processes these
measurements linearly to produce  control signals that  are fed back
into $G_1$ and/or $G_2$.  Here, control actuation can be facilitated in
two ways: 
\begin{enumerate}
\item Modulating the Hamiltonian of $G_k$ by a classical $2 \times 1$ signal vector $u_{1,k}$. If the canonical operators of $G_k$ are represented by a vector of operators $x_k=(q_k,p_k)^T$, this means that the quadratic  Hamiltonian $H_k$ of $G_k$ is augmented by adding a linear (time varying) Hamiltonian term $H_{l,k}(t)$ of the form $H_{l,k}(t)= u_{1,k}(t)^T M_k x_k$, where $M_k$ is a real $2 \times 2$ matrix. Thus, the total Hamiltonian for $G_k$ becomes  $H_k+H_{l,k}(t)$. The signal $u_{1,k}(t)$ is classical and can depend linearly on the classical controller internal  variables (i.e., its state) as well as the measurement results. This actuation can be implemented in different ways, for instance, as described in the Appendix of \cite{YNJP08}. 

\item Modulating (or displacing) an input field of $G_k$ with a classical control signal $u_{2,k}(t)$. This can be implemented by an electro-optic modulator.

\end{enumerate}
In the quantum feedback control system
shown in Fig.~\ref{fig:lqt},  the vector of quantum input fields $\Acal_k(t)$ for the system
$G_k$ is partitioned into two parts: some of which will be the
components of $\Acal_{k1}(t)$ while the others will be components of
$\Acal_{k2}(t)$. Here $\Acal_{k1}(t)$ represents the input fields of  $G_k$
that will not be modulated by the controller,  while $\Acal_{k2}(t)$
represents  the input fields that are modulated by the controller.
Part of the output  vector of quantum signals, $Y_{k2}(t)$, of $G_k$ ($k=1,2$)
is passed through a network of static optical components (as listed
in \cite[Section 6.2]{NJD08}) and homodyne detectors (labelled as
HDN in the diagram) that produces the  set of classical measurements
signals $m_k(t)$ which drive the controller. 
The controller produces
two sets of classical control signals ($k=1,2$): one set,
$u_{k1}(t)$,  modulates the linear Hamiltonian term $H_{1,k}$, and
another set, $u_{k2}(t)$, is modulated by a network of (possibly
electro-optic) modulators (denoted in the diagram by MOD)  to produce
the quantum signal  $\Acal_{k2}(t)$ as one of the input fields into
$G_k$. The signals $\Vcal_{jk}(t)$, $j,k=1,2$ are any additional
quantum noises required for the operation of HDN and MOD (they may be
suitably absorbed into the definition of $w_1$ or $w_2$).  
\begin{figure}[h!]
\centering
\includegraphics[scale=0.35]{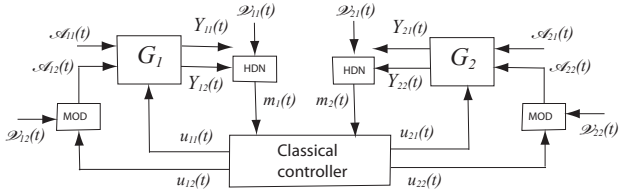}
\caption{Interconnection of $G_1$ and $G_2$ via  a classical
  controller.} \label{fig:lqt}
\end{figure}

The  assumptions that we will use regarding this quantum feedback control system are:
\begin{enumerate}
\item The control  $u(t)=(u_{1,1}(t),u_{2,1}(t),u_{1,2}(t),u_{2,2}(t))^T$ has been generated by a finite dimensional linear (time invariant or time varying)  system. 

\item The quantum signals coming into $G_1$ and $G_2$ come from
  independent sources. Therefore,
  $F_w=\diag(F_{w_1},F_{w_2})$ and hence,
  $S_w=\diag(S_{w_1},S_{w_2})$, $T_w=\diag(T_{w_1},T_{w_2})$. 

\end{enumerate}

Note that the systems $G_1$ and $G_2$ are not directly connected to
one another.  That is, no output field from $G_1$ is passed directly
to $G_2$ and vice-versa. They are only indirectly connected via the
classical controller. Note also that  the overall
closed-loop system is then  a mixed classical-quantum linear stochastic system as described in \cite{JNP06}. 

 Let $z(t)$ denote the
controller internal state which is classical in nature and  of arbitrary dimension $n_c$.  We could also allow the classical controller to be driven by
an additional classical Wiener noise source $w_c(t)$ that is not
derived from the measurement signals. However, this additional noise
may be absorbed into $w_1$ or $w_2$; see \cite{JNP06} for details. Now,
let $x(t)=(x_1(t),x_2(t),z(t))^T$ and $w(t)=(w_1(t),w_2(t))^T$ where $x_1(t)$ represents the vector of system variables for the quantum system $G_1$ and $w_1(t)$ represents the vector of quantum noise inputs for $G_1$. Also, $x_2(t)$ represents the vector of system variables for the quantum system $G_2$ and $w_2(t)$ represents the vector of quantum noise inputs for $G_2$. Now,
since $G_1$ and $G_2$ each only interact with the controller, it follows that the dynamics of the
closed-loop system can be written in the form: 
\begin{eqnarray}
 dx(t)&=& \tilde Ax(t) dt + \tilde B dw(t); x(0)=x,
\label{eq:closed-loop}
 \end{eqnarray}
where the real matrices $\tilde A$ and $\tilde B$ have the special structure:
\begin{eqnarray*}
\tilde A &=& \left[ \small \begin{array}{ccc} A_{11} & 0 & A_{13} \\
0 & A_{22} & A_{23}\\
A_{31} & A_{32} & A_{33} 
\end{array} \normalsize \right];\quad 
\tilde B =  \left[\small \begin{array}{cc} B_{11} & 0  \\ 0 & B_{22} \\ B_{31} & B_{32}   \end{array} \normalsize \right],
\end{eqnarray*}
with  $A_{11}=A_1,\, A_{22}=A_2\,,B_{11}=B_1;\, B_{22}=B_2$.
Our main result in
this section is the following.

\begin{theorem}
\label{T3}
Consider any classical  LTI controller that is connected to
the linear quantum system $G$ such that $\tilde A$ is Hurwitz in the
closed loop system  (\ref{eq:closed-loop}). 
Then the resulting closed-loop dynamical bipartite Gaussian quantum system
is separable at steady state. Thus, a classical LTI controller cannot generate an entangled steady state  from any initial  Gaussian
state. 
\end{theorem}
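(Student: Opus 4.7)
The plan is to prove a stronger ``partial transpose'' inequality, namely $\tilde P+\isym\tilde\Theta_-\geq 0$, where $\tilde P$ is the steady-state symmetric covariance matrix of the full closed-loop state vector $(x_1,x_2,z)^T$ and $\tilde\Theta_-:=\diag(J,-J,0_{n_c\times n_c})$. Since the bipartite covariance $P$ is the top-left $4\times 4$ principal submatrix of $\tilde P$ and the corresponding block of $\tilde\Theta_-$ is $\diag(J,-J)$, this inequality immediately implies the Simon separability LMI $P+\isym\diag(J,-J)\geq 0$ and hence steady-state separability via Lemma~\ref{lem:sep-covar}. The strategy is to adapt, with $\tilde\Theta_-$ in place of $\tilde\Theta$, the system-theoretic derivation of the Heisenberg uncertainty principle in Section~\ref{sec:proof-HUP}.

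First I would write down the two governing matrix equations for the steady state. Since $\tilde A$ is Hurwitz, the analogue of (\ref{eq:cov-ARE}) gives the real Lyapunov equation $\tilde A\tilde P+\tilde P\tilde A^T+\tilde B S_w\tilde B^T=0$, while preservation of the canonical commutation relations on $(x_1,x_2)$ together with classicality of $z$ yields, as in (\ref{eq:comm-cond}),
\[
\tilde A\tilde\Theta+\tilde\Theta\tilde A^T+\tilde B J_w\tilde B^T=0,\qquad J_w:=\diag\bigl(\diag_{m_1}(J),\diag_{m_2}(J)\bigr),
\]
with $\tilde\Theta=\diag(J,J,0_{n_c\times n_c})$. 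The key step is to verify the \emph{partial-transpose commutation equation} obtained by simultaneously flipping the sign of $J$ on the $x_2$-block and on the $w_2$-block:
\[
\tilde A\tilde\Theta_-+\tilde\Theta_-\tilde A^T+\tilde B J_w^-\tilde B^T=0,\qquad J_w^-:=\diag\bigl(\diag_{m_1}(J),-\diag_{m_2}(J)\bigr).
\]
I would verify this block by block using the structural zeros of $\tilde A$ and $\tilde B$ (no direct coupling of $G_1$ to $G_2$). The $(1,1),(2,2),(1,3),(2,3)$ blocks go through cleanly because each sign flip on a $J$ on the left-hand side is exactly compensated by the corresponding sign flip on the right-hand side.

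The main obstacle is the $(3,3)$ block, which demands $B_{31}\diag_{m_1}(J)B_{31}^T-B_{32}\diag_{m_2}(J)B_{32}^T=0$, whereas the original commutation equation only provides the sum condition $B_{31}\diag_{m_1}(J)B_{31}^T+B_{32}\diag_{m_2}(J)B_{32}^T=0$. I would dispose of this by invoking physical realizability: the classical controller state $z$ is driven only by classical measurement signals derived separately from $G_1$ and $G_2$, so that each independent noise source $w_1,w_2$ must individually preserve the self-commutativity of $z$. This forces the stronger individual conditions $B_{31}\diag_{m_1}(J)B_{31}^T=0$ and $B_{32}\diag_{m_2}(J)B_{32}^T=0$, which close the gap and is precisely where the classical nature of the controller enters the argument. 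With the partial-transpose commutation equation in hand, set $\hat P:=\tilde P+\isym\tilde\Theta_-$ and add $\isym$ times that equation to the real Lyapunov equation to obtain
\[
\tilde A\hat P+\hat P\tilde A^T+\tilde B(S_w+\isym J_w^-)\tilde B^T=0.
\]
The driving matrix is block-diagonal in the noise space: its first block is $F_{w_1}=S_{w_1}+\isym\diag_{m_1}(J)\geq 0$ by the field positivity assumption, and its second block is $S_{w_2}-\isym\diag_{m_2}(J)=F_{w_2}^{\#}\geq 0$ since complex conjugation preserves positivity of Hermitian matrices. Hence the forcing is positive semidefinite, and Hurwitz-ness of $\tilde A$ forces $\hat P\geq 0$. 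Restricting to the top-left $4\times 4$ principal submatrix gives the Simon LMI and completes the proof.
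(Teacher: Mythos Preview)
Your proposal is correct and follows essentially the same route as the paper: establish the partial-transpose commutation identity $\tilde A\hat\Theta+\hat\Theta\tilde A^T-\isym\tilde B\hat T_w\tilde B^T=0$ (your $\tilde\Theta_-$ is the paper's $\hat\Theta$, your $\isym J_w^-$ is its $\hat T_w$), add it to the steady-state Lyapunov equation, and use Hurwitz-ness of $\tilde A$ together with $\tilde F_w=\diag(F_{w_1},F_{w_2}^{\#})\geq 0$ to conclude $P+\isym\hat\Theta\geq 0$ and hence separability by Lemma~\ref{lem:sep-covar}. Your explicit treatment of the $(3,3)$ block, invoking classicality of the controller to obtain $B_{31}T_{w_1}B_{31}^T=0$ and $B_{32}T_{w_2}B_{32}^T=0$ separately, is in fact more careful than the paper, which silently drops the $B_{31}$ contribution from that entry; your argument is exactly what is needed to justify that step.
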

\begin{proof}
Since the controller state $z(t)$ is classical, the commutation matrix $\Theta$ for $x(t)$ will be degenerate canonical \cite[Sections II, III, and III C]{JNP06} of the form 
$
\Theta=\left[\small \begin{array}{ccc} \Theta_1 & 0 & 0 \\ 0 & \Theta_2 & 0 \\ 0_{n_c \times 2} & 0_{n_c \times 2} & 0_{n_c \times n_c} \end{array} \normalsize \right], 
$
 where  $\Theta_1=\Theta_2=J$. Suppose that the controller state
 $z(t)$ is of arbitrary dimension $n_c$.  We  have that the closed-loop mixed quantum-classical system satisfies the constraint \cite[Theorem 3.4]{JNP06} : 
 $
 \tilde A\Theta + \Theta \tilde A^T -\isym \tilde B T_w \tilde B^T =0,
 $
with $T_w =\diag(T_{w_1},T_{w_2})$.
This equation is equivalent to the following:
\begin{eqnarray}
&&\left[\small
\begin{array}{cc}
A_{11} \Theta_1+\Theta_1 A_{11}^T -\isym B_{11}T_{w_1}B_{11}^T  \\
0 \\
A_{31}\Theta_1 -\isym B_{31} T_{w_1} B_{11}^T \end{array}\normalsize
\right.\notag \\
&&\quad \small \begin{array}{c} 0 \\  
A_{22} \Theta_2 + \Theta_2 A_{22}^T -\isym B_{22} T_{w_2} B_{22}^T \\ 
A_{32} \Theta_2 -\isym B_{32}T_{w_2}B_{22}^T \end{array} \normalsize \notag\\
&&\quad \left. \small \begin{array}{c}  \Theta_1 A_{31}^T -\isym   B_{11} T_{w_1} B_{31}^T \\  
\Theta_2 A_{32}^T-\isym B_{22} T_{w_2} B_{32}^T \\
-\isym B_{32} T_{w_2} B_{32}^T   
\end{array}\normalsize\right]=0. \notag \\
\label{eq:closed-loop-cons1}
\end{eqnarray}
Multiplying the $(2,2)$, $(2,3)$, $(3,2)$, and $(3,3)$ elements of this matrix equation by $-1$, yields
\begin{eqnarray}
&&\left[\small
\begin{array}{c}
A_{11} \Theta_1+\Theta_1 A_{11}^T -  \isym B_{11}T_{w_1}B_{11}^T  \\
0 \\
A_{31}\Theta_1-\isym B_{31} T_{w_1} B_{11}^T  
\end{array} \normalsize
\right. \notag\\
&&\quad \small \begin{array}{cc}  0\\
-A_{22} \Theta_2 -\Theta_2 A_{22}^T+\isym B_{22} T_{w_2} B_{22}^T \\
-A_{32} \Theta_2 +\isym B_{32}T_{w_2}B_{22}^T  
\end{array} \normalsize  \notag \\
&&\quad \left. \small   \begin{array}{c} \Theta_1 A_{31}^T -\isym B_{11} T_{w_1} B_{31}^T \\
-\Theta_2 A_{32}^T +\isym B_{22} T_{w_2} B_{32}^T\\
\isym B_{32} T_{w_2} B_{32}^T
\end{array} \normalsize \right]=0.
\label{eq:closed-loop-cons2}
\end{eqnarray}
Letting $\hat \Theta= \diag(\Theta_1,-\Theta_2,0_{n_c \times n_c})$ 
 and $\hat T_w= \diag(T_{w_1},-T_{w_2})$, this matrix equality can be written as:
$\tilde A\hat \Theta + \hat \Theta \tilde A^T -\isym \tilde B \hat T_w \tilde B^T =0.$  
We now use the fact  that the closed-loop matrix $\tilde A$ is Hurwitz. Then, as discussed in Section \ref{sec:proof-HUP}, the  symmetrized steady state covariance matrix $P$ satisfies:
$\tilde AP+P \tilde A^T + \tilde B S_{w} \tilde B^T =0.$
Defining $\tilde P=P + \isym \hat \Theta$ and $\tilde F_w = S_w + \hat T_w$, we have that $\tilde P$ satisfies the complex Lyapunov equation:
$\tilde A\tilde P + \tilde P \tilde A^T + \tilde B \tilde F_w \tilde B^T =0.$ 
Recalling that  $S_w=\diag(S_{w_1},S_{w_2})$ and $F_w \geq 0$, we note that $F_{w_i}=S_{w_i}+T_{w_i} \geq 0$ for $i=1,2$.
Then, we note that since $\tilde F_w=\diag(F_{w_1},F_{w_2}^{\#})$, we have that $\tilde F_w \geq 0$; see, e.g., \cite{Bern05}. Therefore, since  $\tilde A$ is Hurwitz, we have that $\tilde P \geq 0$.
Partitioning $P$ according to the partitioning of $x(t)$ into its quantum and classical components as 
$P=\left[ \small\begin{array}{cc} 
 P_{11} &  P_{12} \\  P_{21 } &   P_{22}  \end{array} \normalsize \right]$, the property $\tilde P = P + \isym \hat \Theta \geq 0$ implies that 
$
  P_{11} +\isym \diag(\Theta_1,-\Theta_2) 
=  P_{11} + \isym \diag(J,-J)  \geq 0.
$
 Therefore it follows from Lemma \ref{lem:sep-covar} that the
dynamical bipartite  Gaussian quantum  system is separable at
steady-state. Thus, a classical LTI controller cannot generate an entangled steady state  from any initial  Gaussian
state. \end{proof}
 
Removing the classical controller by defining all its system matrices
to be zero, a special case of Theorem \ref{T3} shows  that two independent
and unconnected  systems $G_1$ and $G_2$, with $A_1$ and $A_2$ Hurwitz, which
are initially entangled become separable in the steady state. 
 
\section{Classical finite dimensional linear controllers cannot generate any entanglement in bipartite Gaussian quantum systems in finite time}
\label{sec:LTI-nonentg-finite}
In the previous section, we have shown that starting from any state,
separable or entangled, a classical LTI controller cannot generate or maintain
entanglement  at
steady state in a linear  dynamical bipartite  Gaussian quantum system. In this section, by a  slight modification of the arguments of
the previous section, we will show that classical  finite dimensional linear controllers  cannot
generate bipartite entanglement in a finite time for any initially
separable linear  dynamical bipartite  Gaussian quantum
system. Moreover, for this finite time analysis, we may drop the
requirement that the controller is chosen so that the closed-loop  matrix $\tilde A$ is Hurwitz.  

We follow the notation and set up of the last section. Instead of considering the
steady state covariance matrix $P$, we now consider the symmetrized finite time covariance matrix $P(t) =\frac{1}{2} \tr \bigl(\rho(0) ( x(t) x(t)^T + (x(t)x(t)^T)^T ) \bigr)$, $0 \leq t < \infty$ satisfying the Lyapunov differential equation:
\begin{eqnarray*}
\dot{P}(t) &=& \tilde A P(t) + P(t) \tilde A^T + \tilde B S_w \tilde B^T,\quad P(0)=P_0.
\end{eqnarray*} 

\begin{theorem}
Suppose that a linear dynamical bipartite Gaussian quantum system is initially
separable. Then it remains separable for all $t \geq 0$ under the action of any classical  LTI controller.
\end{theorem}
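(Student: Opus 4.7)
The plan is to adapt the algebraic argument from the proof of Theorem \ref{T3} by replacing the algebraic Lyapunov equation by a Lyapunov differential equation, and then to exploit the explicit integral representation of its solution to propagate positive semidefiniteness forward in time. The crucial new observation is that the Hurwitz assumption used in Theorem \ref{T3} plays no role here: the integral formula preserves positivity regardless of whether $\tilde A$ is stable, which is exactly why the finite-time statement can drop that hypothesis.

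Following the setup of Section \ref{sec:main-results}, I first introduce $\tilde P(t)=P(t)+\isym\hat\Theta$ with $\hat\Theta=\diag(\Theta_1,-\Theta_2,0_{n_c\times n_c})$. Because $\hat\Theta$ is constant in $t$, we have $\dot{\tilde P}(t)=\dot P(t)$. Substituting the Lyapunov differential equation for $P(t)$ and adding the commutation identity $\tilde A\hat\Theta+\hat\Theta\tilde A^T-\isym\tilde B\hat T_w\tilde B^T=0$, which was derived in the proof of Theorem \ref{T3} by purely algebraic manipulation of the commutation-preservation constraint and therefore does not depend on $\tilde A$ being Hurwitz, yields the complex Lyapunov differential equation
\[
\dot{\tilde P}(t)=\tilde A\tilde P(t)+\tilde P(t)\tilde A^T+\tilde B\tilde F_w\tilde B^T,
\]
with $\tilde F_w=S_w+\hat T_w=\diag(F_{w_1},F_{w_2}^{\#})\geq 0$, exactly as in the steady-state argument.

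For the initial condition, under the standard assumption that the classical controller's state is initially statistically independent of the bipartite Gaussian quantum state, $P(0)$ is block diagonal between its quantum block $P_{11}(0)$ and its classical block $P_{22}(0)$. Initial separability of the bipartite Gaussian system and Lemma \ref{lem:sep-covar} give $P_{11}(0)+\isym\diag(J,-J)\geq 0$, while $P_{22}(0)\geq 0$ as a real covariance of a classical random vector. Hence $\tilde P(0)\geq 0$. The explicit solution
\[
\tilde P(t)=e^{\tilde A t}\tilde P(0)e^{\tilde A^T t}+\int_0^t e^{\tilde A(t-s)}\tilde B\tilde F_w\tilde B^T e^{\tilde A^T(t-s)}\,ds
\]
is a sum of a congruence of a positive semidefinite matrix and an integral of positive semidefinite matrices, so $\tilde P(t)\geq 0$ for every $t\geq 0$. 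Extracting the quantum principal submatrix yields $P_{11}(t)+\isym\diag(J,-J)\geq 0$, and Lemma \ref{lem:sep-covar} then gives separability of the dynamical bipartite Gaussian system at every finite $t$.

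The only subtlety, rather than a genuine obstacle, is recognising that the definition of $\hat\Theta$ and the commutation identity carried over from Theorem \ref{T3} are purely algebraic consequences of the block structure of $\tilde A$, $\tilde B$ and $T_w$, and therefore remain valid in the finite-time regime without any stability hypothesis. Everything else is a direct transcription of the steady-state argument to the Lyapunov differential equation together with the standard positivity-preservation property of Lyapunov flows.
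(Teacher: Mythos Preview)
Your proof is correct and follows essentially the same approach as the paper's: both pass from the algebraic Lyapunov equation to the differential one, reuse the commutation identity $\tilde A\hat\Theta+\hat\Theta\tilde A^T-\isym\tilde B\hat T_w\tilde B^T=0$ from Theorem~\ref{T3} to obtain a complex Lyapunov differential equation for $\tilde P(t)=P(t)+\isym\hat\Theta$, and then invoke positivity preservation of the Lyapunov flow (which you make explicit via the variation-of-constants formula) to conclude. Your treatment of the initial condition is actually more careful than the paper's---the paper simply asserts $\tilde P(0)\geq 0$ ``by Lemma~\ref{lem:sep-covar}'', whereas you correctly observe that the lemma only yields the $4\times 4$ inequality $P_{11}(0)+\isym\diag(J,-J)\geq 0$, and that an independence assumption on the controller's initial state (or, e.g., a deterministic $z(0)$ as in Example~\ref{eg:two-cavities}) is what lifts this to the full $(4+n_c)\times(4+n_c)$ block.
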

\begin{proof}
Since the system is initially separable, $P(0)+\isym \hat \Theta \geq 0$ by Lemma \ref{lem:sep-covar}. Let $\tilde P(t)=P(t) + \isym \hat \Theta$. Then following the same
lines of argument as in the proof of Theorem \ref{T3}, and by using standard results on Lyapunov differential equations, we find that since $\tilde P(0) \geq 0$ and $\tilde F_{w} \geq 0$ (hence also $\tilde B\tilde F_{w} \tilde B^T \geq 0$)  that
$
\tilde P(t) \geq 0,
$
for all $t \geq 0$ regardless of the values of $\tilde A$ and $\tilde B$. Similarly partitioning $P(t)$ according to the partitioning of $x(t)$ into its quantum and classical components as  $P(t)=\left[ \small \begin{array}{cc}  P_{11}(t) &  P_{12}(t) \\  P_{12}(t) &  P_{22}(t) \end{array} \normalsize \right]$, it follows that
$
 P_{11}(t)+\isym \diag(J,-J) \geq 0,
$
for all $t \geq 0$. This shows that when the bipartite Gaussian system is initially in a
separable state, then under the action of a classical  LTI
controller it will remain so for all times. 
\end{proof}
\begin{remark}
Note that it is straightforward to extend the proof of the above
theorem to allow for linear time-varying controllers rather than LTI
controllers. 
\end{remark}
\begin{example}
\label{eg:two-cavities}
Consider the quantum optical system shown in  Fig.~\ref{fig:two-cavities}. Suppose that the two optical cavities are identical with the partially reflecting mirror on each cavity having coupling coefficient  $0.01$. Also, let the position and momentum operators of cavity $G_i$ be $(q_i,p_i)$, and let $\tilde x=(q_2,p_2,q_1,p_1)^T$. Let $w_{i1}(t)=2\Re\{\mathcal{A}_{i}(t)\}$ and $w_{i2}(t)=2\Im\{\mathcal{A}_{i}(t)\}$, $i=1,2,3$. Then the dynamics of the two degree of freedom linear quantum stochastic system (without the controller, homodyne detector and modulator attached) is given by:
\begin{align*}
d\tilde{x}(t) &=- 0.005 \tilde{x}(t)dt  - 0.1 d(w_{21}(t),w_{22}(t),w_{11}(t),w_{12}(t))^T, \\
dy^o(t) &= 0.1 \tilde{x}(t) dt + d (w_{21}(t),w_{22}(t),w_{11}(t) ,w_{12}(t))^T.
\end{align*}  
Let $y^o(t)=(y^{o}_1(t),y^{o}_2(t),y^{o}_3(t),y^{o}_4(t))^T$. The amplitude quadrature $y^o_3(t)$ of $y^o(t)$ is measured using the homodyne detector and is used as the (stochastic) input $m_1(t)=y^{o}_3(t)$ to a first order LTI controller that produces a two dimensional output signal $u(t)=(u_1(t),u_2(t))^T$. The dynamics of the controller is:
\begin{align*}
dz(t)&= Az(t) dt + Bdm_1(t),\;z(0)=0, \\
u(t) &= \left[\begin{array}{cc} C_1^T & C_2^T \end{array}\right]^Tz(t),
\end{align*}
where $z(t)$ denotes the state of the controller, and $A=-1$,
$B=C_1=C_2=1$. The output signal $u(t)$ is passed through an
electro-optic modulator and sent to the partially reflecting mirror of
cavity $G_2$. Let $x(t)=(\tilde x(t),z(t))^T$. We then have that  the
interconnection of the controller with the two cavities via the
homodyne detector and electro-optic  modulator is  a mixed
quantum-classical system with dynamics of the form
(\ref{eq:closed-loop}) defined by the matrices 
\begin{align*}
\tilde A &= \left[ \small \begin{array}{ccccc} -0.005  & 0 & 0 & 0 & -0.1C_1\\ 0 & -0.005 & 0 & 0 & -0.1C_2 \\ 0 & 0 & -0.005 & 0 & 0 \\ 0 &0 & 0&  -0.005 & 0 \\
0  & 0 & 0.1 B  &  0 & A  \end{array} \normalsize \right]; \\
\tilde B &= \left[ \small \begin{array}{cccc}   -0.1 & 0 & 0 & 0 \\ 0 &  -0.1 & 0 & 0  \\   0 & 0 & -0.1 & 0  \\  0 & 0 &  0 & -0.1   \\ 0 & 0 & B & 0 \end{array} \normalsize \right],
\end{align*}
and is driven by the noise $(w_{31}(t),w_{32}(t),w_{11}(t),w_{12}(t))^T$. Suppose that the bipartite state of the two cavities is  in an {\em initially entangled} bipartite Gaussian state with covariance matrix $P_{11}(0)$ given below in (\ref{eq:init-entangled})
\begin{equation}
P_{11}(0)=\left[\small \begin{array}{cccc}  0.5028 &  0 & -0.0528 & 0 \\ 0  &  0.5028  &  0  &  0.0528 \\
   -0.0528 &  0 &  0.5028 & 0 \\ 0  &   0.0528  &  0 &  0.5028 \end{array} \normalsize \right]. \label{eq:init-entangled}
\end{equation}
We take as our measure of entanglement the logarithmic negativity $E_{N}$ \cite{VW02,PV07,YNJP08}. Partitioning $P_{11}(t)$ into $2 \times 2$ blocks as $\left[ \small \begin{array}{cc} P_{11,1}(t) & P_{11,2}(t) \\ P_{11,2}(t)^T & P_{11,3}(t) \normalsize \end{array}\right]$, $E_{N}(P_{11}(t))$ is given by $E_{N}(P_{11}(t))=\max(0,-\ln (2\nu(t)))$, where $\nu(t) = \frac{1}{\sqrt{2}}\sqrt{\tilde \Delta(t) - \sqrt{\tilde \Delta(t)^2-4\det(P_{11}(t))}}$ and $\tilde \Delta(t)= \det(P_{11,1}(t))+\det(P_{11,3}(t))-2\det(P_{11,2}(t))$. Note that the logarithmic negativity is always nonnegative and has a  value of zero if and only if the state is separable \cite{VW02,PV07}, otherwise the state is entangled, with a higher value of $E_N$ indicating a higher degree of entanglement. The initial value of the logarithmic negativity is  $E_{N}(P_{11}(0))=0.1054$. The solid line in Fig.~\ref{fig:example} shows that under the action of this classical controller, the logarithmic negativity steadily decreases and finally goes to zero in a finite time. At this point, the state becomes separable and  remains so for all future times. If we instead start  at an {\em initially separable}  state with covariance matrix $P_{11}(0)$ as given in (\ref{eq:init-separable})
\begin{equation}
P_{11}(0)=\left[\small \begin{array}{cccc}  
0.5704   &      0  &   0.0034   &  0.0562 \\
         0   &  0.5704    &     0  &  0.0528 \\
    0.0034    &     0  &   0.6203  &  0.0499 \\
    0.0562  &  0.0528  &  0.0499  &  0.6203
 \end{array} \normalsize \right], \label{eq:init-separable}
\end{equation}
then the oscillators' joint state remains separable, as shown in the  dashed  line in Fig.~\ref{fig:example}. 
\begin{figure*}[htbp]
\centering
\includegraphics[scale=0.5]{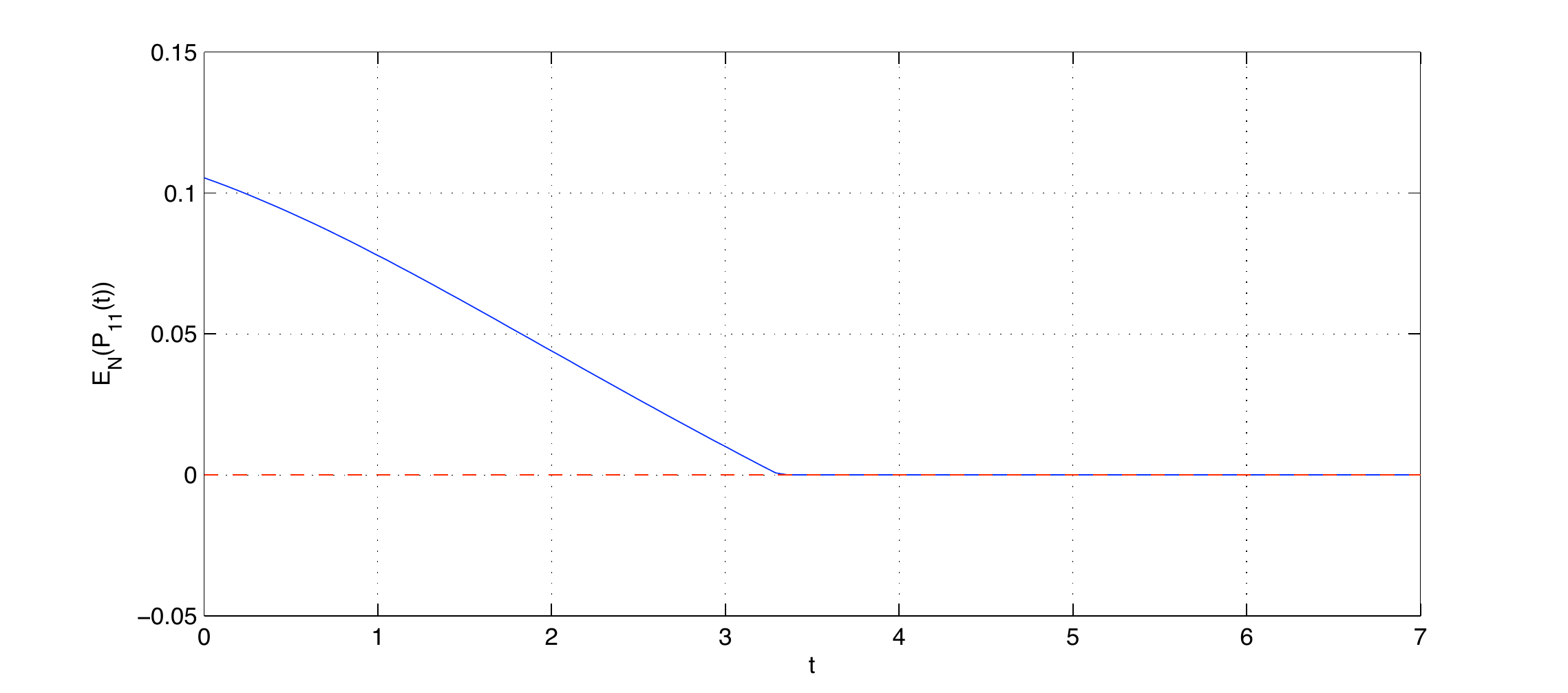}
\caption{The evolution of  logarithmic negativity $E_N(P_{11}(t))$ over time for the system considered in Example \ref{eg:two-cavities}. The solid  line shows the evolution starting from an entangled state with covariance matrix $P_{11}(0)$ in (\ref{eq:init-entangled}).  The  dashed line shows the evolution starting from a separable state with covariance matrix $P_{11}(0)$ in (\ref{eq:init-separable}).} \label{fig:example}
\end{figure*}
\end{example}

\section{Conclusions}
\label{sec:conclusions}

By employing system-theoretic arguments and methods,  we
were  able to 
give a systems theory proof of the fact that classical  LTI controllers cannot generate
steady state entanglement in linear  dynamical bipartite  Gaussian quantum systems. Furthermore,
we also give a systems theory proof of the fact that classical  linear controllers cannot generate
 entanglement in a dynamical bipartite  Gaussian system initially  in a
separable state. An interesting topic for future research is to consider system-theoretic analysis of entanglement between the continuous-mode output
fields.

\bibliographystyle{IEEEtran}
\bibliography{ieeeabrv,rip,mjbib2004}

\end{document}